\numberwithin{equation}{section}
\numberwithin{figure}{section}
\theoremstyle{plain}
\newtheorem{thm}{\protect\theoremname}
\newtheorem*{properties*}{Properties}
  \theoremstyle{plain}
  \newtheorem{lem}[thm]{\protect\lemmaname}
  \theoremstyle{definition}
  \newtheorem{defn}[thm]{\protect\definitionname}
  \theoremstyle{plain}
  \theoremstyle{plain}
  \theoremstyle{plain}
  \theoremstyle{plain}
\newtheorem{rem}{\protect\remarkname}
  \theoremstyle{plain}
  \theoremstyle{plain}
\providecommand{\propsname}{Properties}
  \providecommand{\corollaryname}{Corollary}
  \providecommand{\conjecturename}{Conjecture}
  \providecommand{\definitionname}{Definition}
  \providecommand{\factname}{Fact}
  \providecommand{\lemmaname}{Lemma}
\providecommand{\theoremname}{Theorem}
\providecommand{\remarkname}{Remark}
\begin{document}

\title{Symmetric Laplacians, Quantum Density Matrices and their Von-Neumann Entropy}

\author{David E. Simmons, Justin P. Coon, and Animesh Datta}
\begin{abstract}
We show that the (normalized) symmetric Laplacian of a simple graph can be obtained from the partial trace over a pure bipartite quantum state that resides in a bipartite Hilbert space (one part corresponding to the vertices, the other corresponding to the edges). This suggests an interpretation of the symmetric Laplacian's Von Neumann entropy as a measure of bipartite entanglement present between the two parts of the state. We then study extreme values for a connected graph's generalized  R\'enyi-$p$ entropy. Specifically, we show that
\begin{enumerate}\item the complete graph achieves maximum entropy, 
\item the $2$-regular graph:\begin{enumerate}
\item achieves minimum R\'enyi-$2$ entropy among all $k$-regular graphs, 
\item is within $\log 4/3$ of the minimum R\'enyi-$2$ entropy and $\log4\sqrt{2}/3$ of the minimum Von Neumann entropy among all connected graphs,
\item achieves a Von Neumann entropy less than the star graph.
\end{enumerate}
\end{enumerate}
Point $(2)$ contrasts sharply with similar work applied to (normalized) combinatorial Laplacians, where it has been shown that the star graph almost always achieves minimum Von Neumann entropy. In this work we find that the star graph achieves maximum entropy in the limit as the number of vertices grows without bound.

\smallskip
\noindent \textbf{Keywords:} Symmetric; Laplacian; Quantum; Entropy; Bounds; R\'enyi.
\end{abstract}
 
\maketitle

\section{Introduction\label{sec:intro}}

The mathematical theory of quantum mechanics allows us to view quantum states of finite-dimensional systems as Hermitian, positive semi-definite matrices with unit trace \cite{von1955mathematical}. It is also known that the combinatorial Laplacian of a graph is a symmetric positive semi-definite matrix. Thus, normalizing this matrix by its trace allows us to view the graph as a quantum state, \cite{braunstein2006laplacian}. A natural next step is to study the information content of the graph by considering the Von Neumann entropy of the graph's corresponding quantum state, \cite{passerini2008neumann}. In that work, it was noted that the Von Neumann entropy may be interpreted as a measure of network regularity. In \cite{anand2009entropy}, it was then shown that for scale free networks the Von Neumann entropy of a graph is linearly related to the Shannon entropy of the graph's ensemble. Correlations between these entropies were observed when the graph's degree distribution displayed heterogeneity in \cite{anand2011shannon}. It was not until the work of \cite{de2016interpreting} that a well defined operational interpretation of the graph's Von Neumann entropy was established. Specifically, in that work it was demonstrated that the Von Neumann entropy should be interpreted as the number of entangled bits that are represented by the graph's quantum state. Further extensions were made in \cite{dairyko2016note}, where it was shown that, as the number of vertices grows, almost all connected graphs have a Von Neumann entropy that is greater than the star graph.

In this work, 
we show that the graph's \emph{symmetric} (i.e., not combinatorial) Laplacian can be obtained from the partial trace of a particular bipartite pure state. With this, we immediately conclude that the Von Neumann entropy should be interpreted as a measure of bipartite entanglement present within this pure state. We then proceed to study the entropic properties of this state: we show that the complete graph achieves maximum Von Neumann entropy, while the $2$-regular graph is within $\log4\sqrt{2}/3$ of the smallest Von Neumann entropic values among all connected graphs. This contrasts with work in \cite{dairyko2016note}, where it was conjectured that the \emph{star graph} minimizes the Von Neumann entropy. Here we show that the star graph (asymptotically) maximizes the Von Neumann entropy. We also study the generalized  R\'enyi-$p$ entropy, showing that the $2$-regular graph minimizes the R\'enyi-$2$ entropy among all $k$-regular graphs. To the best of the authors' knowledge this is the first time such a study has been performed.

 The remainder of this paper is organized as follows. Section \ref{sec:graphsQStates} introduces basic graph theoretic concepts and proceeds to establish a relationship between the partial trace of a particular bipartite pure state and the symmetric Laplacian. Section \ref{sec:extremal} then establishes the extremal properties of the entropy for connected graphs. Finally, section~\ref{sec:conc} concludes the paper.

 \subsection{Notation}

We use $\mathcal{K}_n$ to denote the complete graph on $n$ vertices, $\mathcal{K}_{n-k,k}$ to denote the complete bipartite (one part with $n-k$ vertices, the other with $k$ vertices) graph on $n$ vertices, and $\mathcal{R}_{k,n}$ to denote the $k$-regular graph on $n$ vertices.

\section{Symmetric Laplacians and Quantum states\label{sec:graphsQStates}}

\subsection{Basic Graph Theory\label{sec:intrographs}}

Consider an undirected graph $G=\left(V,E\right)$, which consists of a vertex set 
\begin{equation}
V=\left\{ v_{1},\dots,v_{n}\right\} , 
\end{equation}
with $\left|V\right|=n$, and an edge set 
\begin{equation}
E=\left\{ e_{1},\dots,e_{m}\right\} ,
\end{equation}
 with $\left|E\right|=m.$ We assume that the element of the edge set (here we assume it is the $k$th element) connecting vertex pair $\left(i,j\right)$ can be written as \begin{equation}e_{k}:=v_{i,j}.\label{eq:edge_vertexeq}\end{equation} The degree matrix of the graph is given by 
\begin{equation}
\Delta =\mathrm{diag}\left\{ d_{1},\dots,d_{n}\right\} ,
\end{equation}
where $d_{i}$ is the degree of $v_{i}$ (i.e., the number of vertices connected to $v_{i}$). The $\left(i,j\right)$th element of a graph's adjacency matrix is given by
\begin{equation}
\left[A \right]_{i,j}=1
\end{equation}
if vertex pair $\left(i,j\right)$ is connected by an edge. The (combinatorial) Laplacian of the graph is then defined to be  \cite{newman2000laplacian}
\begin{equation}
L =\Delta -A ,\label{eq:standLapl}
\end{equation}
while the positive Laplacian is defined to be 
\begin{equation}
L^+ =\Delta +A  .
\end{equation}
The symmetric Laplacian is then given by \cite{chung1997spectral}
\begin{equation}
\mathcal{L} =\Delta ^{-1/2}L \Delta ^{-1/2}  ,\label{eq:symmetricLap}
\end{equation}
while the positive symmetric Laplacian is given by
\begin{equation}
\mathcal{L}^+ =\Delta ^{-1/2}L^+ \Delta ^{-1/2}.\label{eq:symmetricLap2}
\end{equation}
We can also write the $(i,j)$th element of $\mathcal{L}$ as
\begin{equation}
\left[\mathcal{L}\right]_{i,j}=\left\{ \begin{array}{cc}
1 & \mathrm{if}\;i=j\\
-\frac{1}{\sqrt{d_{i}d_{j}}} & \mathrm{if}\;i\;\mathrm{and}\;j\mathrm{\;are\;neighbors}\\
0 & \mathrm{otherwise}
\end{array}.\right.\label{eq:ijthLaplacian}
\end{equation}

\subsection{Symmetric Laplacians as Quantum States}

To associate the symmetric graph Laplacian with a quantum state, we take an approach inspired by \cite{de2016interpreting}. Firstly, if we impose an orientation on the graph, the symmetric Laplacian can be decomposed as follows:
\begin{equation}
\mathcal{L} =S S^\dagger,
\end{equation}
where $S := \Delta ^{-1/2} M$ and $M$ is defined to be 
\[
\left[M\right]_{v,e}=\left\{ \begin{array}{cc}
1, & \mathrm{if}\;v\;\mathrm{is\;a\;source}\\
-1, & \mathrm{if}\;v\;\mathrm{is\;a\;sink}\\
0, & \mathrm{otherwise}.
\end{array}\right.
\]
Note, $MM^\dagger = L$, \cite{de2016interpreting}. Interestingly, however, the symmetric Laplacian of a graph is independent of the orientation
of $S$ (i.e., swapping any sink with its source does
not affect the Laplacian) \cite{chung1997spectral}. 

We can also write $S$ as a sum of outer products.
To do this, we first put the vertex set into one-to-one correspondence
with elements from a set of $n$ orthonormal basis vectors
\begin{equation}
\mathcal{V}=\left\{ \mathbf{v}_{1},\dots,\mathbf{v}_{n}\right\} ,\label{eq:vertexset}
\end{equation}
and the edge set into one-to-one correspondence with elements from
a set of $m$ orthonormal basis vectors
\[
\mathcal{E}=\left\{ \mathbf{e}_{1},\dots,\mathbf{e}_{m}\right\} .
\]
In a similar manner to \eqref{eq:edge_vertexeq}, we assume that the
element of $\mathcal{E}$ (here, the $k$th element)
corresponding to the edge connecting vertex pair $v_{i}$ and $v_{j}$
can be written as 
\begin{equation}
\mathbf{e}_{k}:=\mathbf{v}_{i,j}.\label{eq:edgevertexcorr}
\end{equation}
We can then express $S$ as 
\begin{equation}
S=\sum_{\mathbf{v}_{i,j}\in\mathcal{E}}f_{\mathbf{v}_{i,j}}\left(\mathbf{v}_{i}\right)\left(\frac{\mathbf{v}_{i}}{\sqrt{d_i}}-\frac{\mathbf{v}_{j}}{\sqrt{d_j}}\right)\mathbf{v}_{i,j}^{\dagger},\label{eq:standardincidencematrix}
\end{equation}
where 
\[
f_{\mathbf{v}_{i,j}}\left(\mathbf{v}_{i}\right)=\left\{ \begin{array}{cc}
+1 & \mathrm{if}\;\mathbf{v}_{i}\mathrm{\;is\;a\;source\;for\;edge}\;\mathbf{v}_{i,j}\\
-1 & \mathrm{if}\;\mathbf{v}_{i}\mathrm{\;is\;a\;sink\;for\;edge}\;\mathbf{v}_{i,j}.
\end{array}\right.
\]

To aid our exposition, in what follows we would like to construct
$S$ without having to specify an orientation. This can be done
by doubling the edge space, where for each vertex pair $\left(v_{i},v_{j}\right)$
connected by a directed edge, we introduce the arcs $v_{i}\to v_{j}$
and $v_{j}\to v_{i}$. As an example, Fig. \ref{fig:directedGraph}
shows a directed graph $G$ with vertex set $\left\{ 1,2,3\right\} $
and arc set $\left\{ 1\to2,1\to3\right\} $. The corresponding undirected
graph is shown in Fig. \ref{fig:undirectedGraph}, which is constructed
from the vertex set $\left\{ 1,2,3\right\} $ with arc set $\left\{ 1\to2,2\to1,1\to3,3\to1\right\} $.
As before, the vertex set can be put into one-to-one correspondence
with $\mathcal{V}$ (see (\ref{eq:vertexset})), while the (doubled)
edge set can be put into one-to-one correspondence with elements from
a set of $2m$ orthonormal basis vectors
\[
\mathcal{E}=\left\{ \mathbf{e}_{1},\dots,\mathbf{e}_{2m}\right\} .
\]
In a similar manner to \eqref{eq:edge_vertexeq} and \eqref{eq:edgevertexcorr},
we assume that the element of $\mathcal{E}$ (here,
the $k$th element) corresponding to the arc $i\to j$ can be written
as 
\[
\mathbf{e}_{k}:=\mathbf{v}_{i,j}.
\]
Note, in \eqref{eq:edgevertexcorr}
we did not draw a distinction between $\mathbf{v}_{i,j}$ and $\mathbf{v}_{j,i}$,
and we had $\left|\mathbf{v}_{i,j}^{\dagger}\mathbf{v}_{j,i}\right|=1$.
However, now $\mathbf{v}_{i,j}$ and $\mathbf{v}_{j,i}$ correspond
to unique (orthogonal) arcs and we have $\mathbf{v}_{i,j}^{\dagger}\mathbf{v}_{j,i}=0.$
From here on we can interpret $\mathbf{v}_{i,j}$ in the following way
\[
\mathbf{v}_{i,j}=\mathbf{v}_{i}\otimes\mathbf{v}_{j}
\]
whenever an edge exists between vertices $i$ and $j$; otherwise, 
\[
\mathbf{v}_{i,j}=\mathbf{0}\otimes\mathbf{0}.
\]

We will now construct the incidence matrix of the graph without having to specify an orientation.
We can partition $\mathcal{E}$ into two disjoint subsets $\mathcal{E}_{1}$
and $\mathcal{E}_{2}$, where $\mathcal{E}_{1}$ canonically characterizes
one orientation of the graph while ${\mathcal{E}}_{2}$ characterizes
the complimentary orientation; i.e., for each $\mathbf{v}_{i,j}\in\mathcal{E}_{1}$ there
is a corresponding $\mathbf{v}_{j,i}\in\mathcal{E}_{2}$. We can then
write $\bar{S}$ (the corresponding $S$ matrix for the graph after the edge doubling) as
the following sum of outer products
\begin{eqnarray}
\bar{S} & = & \sum_{\mathbf{v}_{i,j}\in\mathcal{E}}f_{\mathbf{v}_{i,j}}\left(\mathbf{v}_{i}\right)\left(\frac{\mathbf{v}_{i}}{\sqrt{d_i}}-\frac{\mathbf{v}_{j}}{\sqrt{d_j}}\right)\mathbf{v}_{i,j}^{\dagger}\nonumber \\
 & = & \sum_{\mathbf{v}_{i,j}\in\mathcal{E}_{1}}f_{\mathbf{v}_{i,j}}\left(\mathbf{v}_{i}\right)\left(\frac{\mathbf{v}_{i}}{\sqrt{d_i}}-\frac{\mathbf{v}_{j}}{\sqrt{d_j}}\right)\mathbf{v}_{i,j}^{\dagger}+\sum_{\mathbf{v}_{j,i}\in\mathcal{E}_{2}}f_{\mathbf{v}_{j,i}}\left(\mathbf{v}_{j}\right)\left(\frac{\mathbf{v}_{j}}{\sqrt{d_i}}-\frac{\mathbf{v}_{i}}{\sqrt{d_j}}\right)\mathbf{v}_{j,i}^{\dagger}\nonumber \\
 & = & \sum_{\mathbf{v}_{i,j}\in\mathcal{E}_{1}}f_{\mathbf{v}_{i,j}}\left(\mathbf{v}_{i}\right)\left(\frac{\mathbf{v}_{i}}{\sqrt{d_i}}-\frac{\mathbf{v}_{j}}{\sqrt{d_j}}\right)\mathbf{v}_{i,j}^{\dagger}-\sum_{\mathbf{v}_{j,i}\in\mathcal{E}_{2}}f_{\mathbf{v}_{j,i}}\left(\mathbf{v}_{j}\right)\left(\frac{\mathbf{v}_{i}}{\sqrt{d_i}}-\frac{\mathbf{v}_{j}}{\sqrt{d_j}}\right)\mathbf{v}_{j,i}^{\dagger}\nonumber \\
 & = & \sum_{\mathbf{v}_{i,j}\in\mathcal{E}_{1}}\left(\frac{\mathbf{v}_{i}}{\sqrt{d_i}}-\frac{\mathbf{v}_{j}}{\sqrt{d_j}}\right)\mathbf{v}_{i,j}^{\dagger}-\sum_{\mathbf{v}_{j,i}\in\mathcal{E}_{2}}\left(\frac{\mathbf{v}_{i}}{\sqrt{d_i}}-\frac{\mathbf{v}_{j}}{\sqrt{d_j}}\right)\mathbf{v}_{j,i}^{\dagger}\nonumber \\
 & = & \sum_{\mathbf{v}_{i,j}\in\mathcal{E}_{1}}\left(\frac{\mathbf{v}_{i}}{\sqrt{d_i}}-\frac{\mathbf{v}_{j}}{\sqrt{d_j}}\right)\left(\mathbf{v}_{i,j}-\mathbf{v}_{j,i}\right)^{\dagger}\nonumber \\
 & = & \sum_{\mathbf{v}_{i,j}\in\mathcal{E}_{2}}\left(\frac{\mathbf{v}_{i}}{\sqrt{d_i}}-\frac{\mathbf{v}_{j}}{\sqrt{d_j}}\right)\left(\mathbf{v}_{i,j}-\mathbf{v}_{j,i}\right)^{\dagger},\label{eq:undirectedincidence}
\end{eqnarray}
where the first line follows from (\ref{eq:standardincidencematrix}),
the second line follows from the partition of $\mathcal{E}$ into
two disjoint subsets $\mathcal{E}_{1}$ and $\mathcal{E}_{2}$, the
third and fourth lines follow from the fact that $f_{\mathbf{v}_{i,j}}\left(\mathbf{v}_{i}\right)=f_{\mathbf{v}_{j,i}}\left(\mathbf{v}_{j}\right)=1$,
and the last two lines follow from the symmetry of the partition of
$\mathcal{E}$ into $\mathcal{E}_{1}$ and $\mathcal{E}_{2}$. We
then have
\begin{eqnarray}
\bar{S}\bar{S}^{\dagger} & = & \left(\sum_{\mathbf{v}_{i,j}\in\mathcal{E}_{1}}\left(\frac{\mathbf{v}_{i}}{\sqrt{d_i}}-\frac{\mathbf{v}_{j}}{\sqrt{d_j}}\right)\left(\mathbf{v}_{i,j}-\mathbf{v}_{j,i}\right)^{\dagger}\right)\left(\sum_{\mathbf{v}_{i,j}\in\mathcal{E}_{1}}\left(\frac{\mathbf{v}_{i}}{\sqrt{d_i}}-\frac{\mathbf{v}_{j}}{\sqrt{d_j}}\right)\left(\mathbf{v}_{i,j}-\mathbf{v}_{j,i}\right)^{\dagger}\right)^{\dagger}\nonumber\\
 & = & \left(\sum_{\mathbf{v}_{i,j}\in\mathcal{E}_{1}}\left(\frac{\mathbf{v}_{i}}{\sqrt{d_i}}-\frac{\mathbf{v}_{j}}{\sqrt{d_j}}\right)\left(\mathbf{v}_{i,j}-\mathbf{v}_{j,i}\right)^{\dagger}\right)\left(\sum_{\mathbf{v}_{i,j}\in\mathcal{E}_{1}}\left(\mathbf{v}_{i,j}-\mathbf{v}_{j,i}\right)\left(\frac{\mathbf{v}_{i}}{\sqrt{d_i}}-\frac{\mathbf{v}_{j}}{\sqrt{d_j}}\right)^{\dagger}\right)\nonumber\\
 & = & \sum_{\mathbf{v}_{i,j}\in\mathcal{E}_{1}}\left(\frac{\mathbf{v}_{i}}{\sqrt{d_i}}-\frac{\mathbf{v}_{j}}{\sqrt{d_j}}\right)\left(\mathbf{v}_{i,j}-\mathbf{v}_{j,i}\right)^{\dagger}\left(\mathbf{v}_{i,j}-\mathbf{v}_{j,i}\right)\left(\frac{\mathbf{v}_{i}}{\sqrt{d_i}}-\frac{\mathbf{v}_{j}}{\sqrt{d_j}}\right)^{\dagger}\nonumber\\
 & = & 2\sum_{\mathbf{v}_{i,j}\in\mathcal{E}_{1}}\left(\frac{\mathbf{v}_{i}}{\sqrt{d_i}}-\frac{\mathbf{v}_{j}}{\sqrt{d_j}}\right)\left(\frac{\mathbf{v}_{i}}{\sqrt{d_i}}-\frac{\mathbf{v}_{j}}{\sqrt{d_j}}\right)^{\dagger}\nonumber\\
 & = & 2SS^{\dagger}\nonumber\\
 & = & 2\mathcal{L} .\label{eq:SbarSbarlap}
\end{eqnarray}
The fifth line follows from \eqref{eq:standardincidencematrix}, where the factor of $2$ is a result of the edge doubling procedure that 
was performed; while the final line is a standard result (see \cite{chung1997spectral}). 
\begin{figure}
\centering{}\subfloat[\label{fig:directedGraph}An example of a digraph.]{\centering{}\includegraphics[scale=0.5]{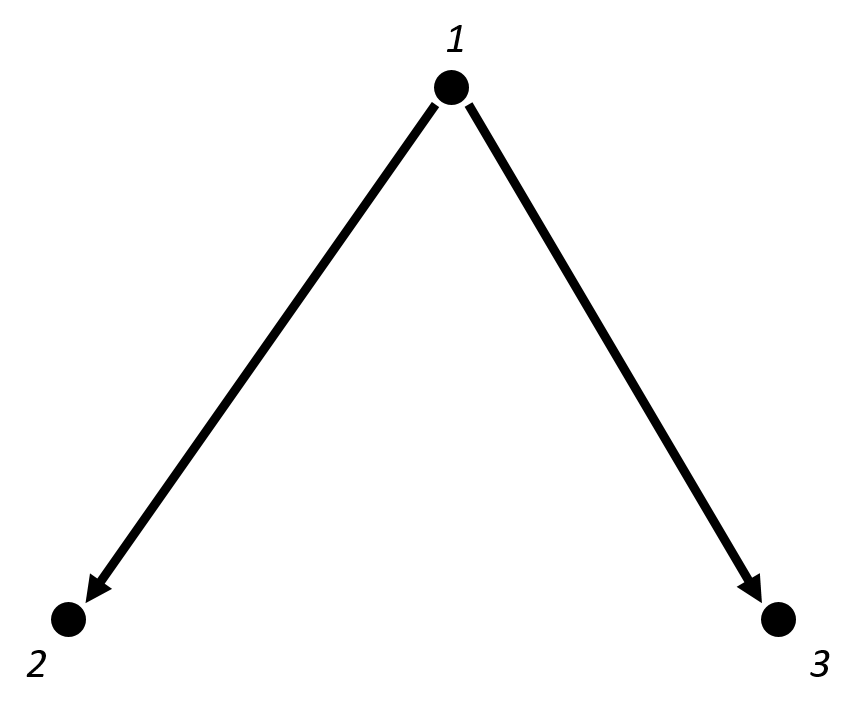}}\subfloat[\label{fig:undirectedGraph}An example of the undirected version of
the digraph presented in Fig. \ref{fig:directedGraph}.]{\centering{}\includegraphics[scale=0.5]{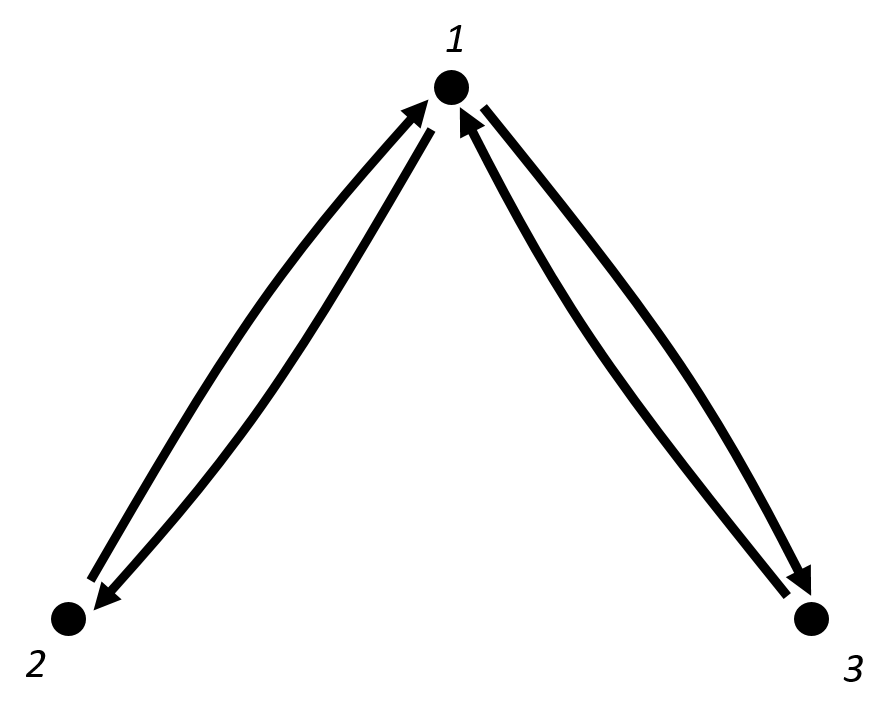}}\caption{}
\end{figure}

Because of the relationship between outer products and tensor products,
(\ref{eq:undirectedincidence}) highlights implicitly that we can
associate the element 
\begin{equation}
\psi_{G}:=\frac{1}{\sqrt{2}}\sum_{\mathbf{v}_{i,j}\in\mathcal{E}_{1}}\left(\frac{\mathbf{v}_{i}}{\sqrt{d_i}}-\frac{\mathbf{v}_{j}}{\sqrt{d_j}}\right)\otimes\left(\mathbf{v}_{i,j}-\mathbf{v}_{j,i}\right)\label{eq:qstate_looped}
\end{equation}
 from the tensor product Hilbert space $\mathcal{H}_{\mathcal{V}}\otimes\mathcal{H}_{\mathcal{E}}$ ($\mathcal{H}_{\mathcal{V}}$ corresponding to the vertex set and $\mathcal{H}_{\mathcal{E}}$ corresponding to the edge set),
where 
\[
\mathcal{H}_{\mathcal{E}}:=\mathcal{H}_{\mathcal{V}}\otimes\mathcal{H}_{\mathcal{V}},
\]
 with $\bar{S}$, so that $\psi_{G}$ is proportional to the
vectorization of $\bar{S}$. Finally, we reach the following lemma
\begin{lem}
\label{lem:SymLaplacianQstate}With $\psi_{G}$ given by (\ref{eq:qstate_looped}),
and $\mathcal{L}  $ and $\mathcal{L}^{+} $ given in
\eqref{eq:symmetricLap} and \eqref{eq:symmetricLap2}, we have
\begin{equation}
\mathrm{Tr}_{\mathcal{E}}\left\{ \psi_{G}\psi_{G}^{\dagger}\right\} =\mathcal{L} .
\end{equation}
and 
\begin{equation}
\mathrm{Tr}_{\mathcal{V}}\left\{ \psi_{G}\psi_{G}^{\dagger}\right\} =\mathcal{L}^{+} .
\end{equation}
\end{lem}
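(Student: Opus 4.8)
The plan is to read both identities off the vectorization picture already assembled above. As observed there, $\psi_{G}=\tfrac{1}{\sqrt{2}}\,\mathrm{vec}(\bar{S})$, where $\bar{S}$ is the orientation-free (edge-doubled) matrix of \eqref{eq:undirectedincidence}, regarded as a linear map from $\mathcal{H}_{\mathcal{E}}$ to $\mathcal{H}_{\mathcal{V}}$. The only generic ingredient I need is the standard pair of partial-trace identities for a vectorized operator: for any matrix $X$ one has $\mathrm{Tr}_{\mathcal{E}}\{\mathrm{vec}(X)\mathrm{vec}(X)^{\dagger}\}=XX^{\dagger}$ and $\mathrm{Tr}_{\mathcal{V}}\{\mathrm{vec}(X)\mathrm{vec}(X)^{\dagger}\}=\overline{X^{\dagger}X}$, which for the real matrix $\bar{S}$ is simply $\bar{S}^{\dagger}\bar{S}$. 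I would prove these in one line, by expanding $\mathrm{vec}(X)=\sum_{v,e}[X]_{v,e}\,\mathbf{v}\otimes\mathbf{e}$ in the orthonormal bases $\mathcal{V}$ and $\mathcal{E}$ and using $\mathbf{v}^{\dagger}\mathbf{v}'=\delta_{vv'}$, $\mathbf{e}^{\dagger}\mathbf{e}'=\delta_{ee'}$ to collapse the resulting four-fold sum.

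The first identity is then immediate: taking $X=\bar{S}$ and $\mathrm{vec}(\bar{S})=\sqrt{2}\,\psi_{G}$ gives $\mathrm{Tr}_{\mathcal{E}}\{\psi_{G}\psi_{G}^{\dagger}\}=\tfrac{1}{2}\bar{S}\bar{S}^{\dagger}$, and $\bar{S}\bar{S}^{\dagger}=2\mathcal{L}$ is exactly \eqref{eq:SbarSbarlap}, so $\mathrm{Tr}_{\mathcal{E}}\{\psi_{G}\psi_{G}^{\dagger}\}=\mathcal{L}$.

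The same reasoning gives $\mathrm{Tr}_{\mathcal{V}}\{\psi_{G}\psi_{G}^{\dagger}\}=\tfrac{1}{2}\bar{S}^{\dagger}\bar{S}$, so for the second identity it remains to show $\tfrac{1}{2}\bar{S}^{\dagger}\bar{S}=\mathcal{L}^{+}$. I would do this by the exact analogue of the block manipulation in \eqref{eq:SbarSbarlap}: expand $\psi_{G}\psi_{G}^{\dagger}$ as a double sum over $\mathcal{E}_{1}$ using one of the sum-of-outer-products forms of $\bar{S}$ in \eqref{eq:undirectedincidence}, trace out the vertex factor $\mathcal{H}_{\mathcal{V}}$, evaluate the inner products $(\tfrac{\mathbf{v}_{i}}{\sqrt{d_{i}}}-\tfrac{\mathbf{v}_{j}}{\sqrt{d_{j}}})^{\dagger}(\tfrac{\mathbf{v}_{k}}{\sqrt{d_{k}}}-\tfrac{\mathbf{v}_{l}}{\sqrt{d_{l}}})$ that appear via orthonormality of $\{\mathbf{v}_{i}\}$, and reorganize the surviving terms into a diagonal part plus a (normalized) adjacency part, matched against $\mathcal{L}^{+}=\Delta^{-1/2}(\Delta+A)\Delta^{-1/2}$ through the entrywise description \eqref{eq:ijthLaplacian} but with $+\tfrac{1}{\sqrt{d_{i}d_{j}}}$ off the diagonal in place of $-\tfrac{1}{\sqrt{d_{i}d_{j}}}$. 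The factor $2$ from the edge doubling and the $\tfrac{1}{\sqrt{2}}$ normalizing $\psi_{G}$ cancel, so the resulting operator still has trace $n$, consistent with $\mathrm{Tr}\,\mathcal{L}^{+}=n$.

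The step I expect to be the main obstacle is this last reorganization. One has to keep careful track of which arc of each undirected edge sits in $\mathcal{E}_{1}$ and which in $\mathcal{E}_{2}$, so that the cross terms between an arc and its reverse combine with the correct sign — this is precisely the mechanism that converts the $-A$ of $\mathcal{L}$ into the $+A$ of $\mathcal{L}^{+}$ when source and sink are interchanged — and one has to check that, after tracing out $\mathcal{H}_{\mathcal{V}}$, the reduced operator genuinely collapses onto the structure carried by $\mathcal{L}^{+}$ rather than spreading over the full doubled space $\mathcal{H}_{\mathcal{E}}=\mathcal{H}_{\mathcal{V}}\otimes\mathcal{H}_{\mathcal{V}}$; here I would exploit the antisymmetry of $\mathbf{v}_{i,j}-\mathbf{v}_{j,i}$ under swapping the two vertex copies. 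Everything else is the same orthogonality-driven telescoping already used to establish $\bar{S}\bar{S}^{\dagger}=2\mathcal{L}$.
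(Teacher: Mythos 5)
Your handling of the first identity is correct and is, in substance, the paper's own argument: the paper expands $\psi_{G}\psi_{G}^{\dagger}$ as a double sum over $\mathcal{E}_{1}$, uses orthogonality of the arc vectors so that only diagonal terms survive the partial trace over $\mathcal{H}_{\mathcal{E}}$, and cancels the $\tfrac{1}{2}$ against $\mathrm{Tr}\{(\mathbf{v}_{i,j}-\mathbf{v}_{j,i})(\mathbf{v}_{i,j}-\mathbf{v}_{j,i})^{\dagger}\}=2$ before invoking \eqref{eq:SbarSbarlap}. Your vectorization identity $\mathrm{Tr}_{\mathcal{E}}\{\mathrm{vec}(X)\mathrm{vec}(X)^{\dagger}\}=XX^{\dagger}$ packages exactly that computation, so this half matches the paper.

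The genuine gap is in the second identity, at precisely the step you flag as ``the main obstacle'': that step cannot be completed. Your reduction $\mathrm{Tr}_{\mathcal{V}}\{\psi_{G}\psi_{G}^{\dagger}\}=\tfrac{1}{2}\bar{S}^{\dagger}\bar{S}$ is right, but $\tfrac{1}{2}\bar{S}^{\dagger}\bar{S}$ necessarily has the same nonzero spectrum as $\tfrac{1}{2}\bar{S}\bar{S}^{\dagger}=\mathcal{L}$ (the general fact that $XX^{\dagger}$ and $X^{\dagger}X$ are isospectral away from zero; equivalently, the two reduced states of the pure state $\psi_{G}$ must share their nonzero spectrum). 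Since $\mathcal{L}+\mathcal{L}^{+}=2I$, the matrices $\mathcal{L}$ and $\mathcal{L}^{+}$ are isospectral only when the spectrum of $\mathcal{L}$ is symmetric about $1$, i.e.\ only for bipartite graphs. For $\mathcal{K}_{3}$ the nonzero eigenvalues of $\mathcal{L}$ are $\{3/2,3/2\}$ (rank $2$) while $\mathcal{L}^{+}$ has eigenvalues $\{2,1/2,1/2\}$ (rank $3$), so no reorganization of terms, and no identification of the doubled edge space with the vertex space, can turn $\tfrac{1}{2}\bar{S}^{\dagger}\bar{S}$ into $\mathcal{L}^{+}$. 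The antisymmetry of $\mathbf{v}_{i,j}-\mathbf{v}_{j,i}$ does confine the reduced operator to an $m$-dimensional subspace, but in the orthonormal basis $\{\tfrac{1}{\sqrt{2}}(\mathbf{v}_{i,j}-\mathbf{v}_{j,i})\}$ of that subspace what you obtain is the $m\times m$ Gram matrix $M^{\dagger}\Delta^{-1}M$ of the vectors $\mathbf{v}_{i}/\sqrt{d_{i}}-\mathbf{v}_{j}/\sqrt{d_{j}}$, not $\mathcal{L}^{+}$. To be clear, you have not overlooked an argument the paper supplies: its entire proof of the second identity is the sentence ``similar calculations hold,'' and the identity as literally stated fails for non-bipartite graphs; the paper's later results are unaffected because they only use $H(G)=H(\rho_{\mathcal{V}})$.
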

\begin{proof}
We present the proof for $\mathrm{Tr}_{\mathcal{E}}\left\{ \psi_{G}\psi_{G}^{\dagger}\right\} $.
Similar calculations hold for $\mathrm{Tr}_{\mathcal{V}}\left\{ \psi_{G}\psi_{G}^{\dagger}\right\} $.
We have 
\begin{eqnarray}
 \mathrm{Tr}_{\mathcal{E}}\left\{ \psi_{G}\psi_{G}^{\dagger}\right\} 
 & = & \frac{1}{2}\mathrm{Tr}_{\mathcal{E}}\left\{ \left(\sum_{\mathbf{v}_{i,j}\in\mathcal{E}_{1}}\left(\frac{\mathbf{v}_{i}}{\sqrt{d_i}}-\frac{\mathbf{v}_{j}}{\sqrt{d_j}}\right)\otimes\left(\mathbf{v}_{i,j}-\mathbf{v}_{j,i}\right)\right)\right.\nonumber\\
 &  & \left.\times\left(\sum_{\mathbf{v}_{i,j}\in\mathcal{E}_{1}}\left(\frac{\mathbf{v}_{i}}{\sqrt{d_i}}-\frac{\mathbf{v}_{j}}{\sqrt{d_j}}\right)\otimes\left(\mathbf{v}_{i,j}-\mathbf{v}_{j,i}\right)\right)^\dagger\right\}\nonumber \\
 & = & \frac{1}{2}\sum_{\mathbf{v}_{i,j}\in\mathcal{E}_{1}}\mathrm{Tr}_{\mathcal{E}}\left\{ \left(\frac{\mathbf{v}_{i}}{\sqrt{d_i}}-\frac{\mathbf{v}_{j}}{\sqrt{d_j}}\right)\left(\frac{\mathbf{v}_{i}}{\sqrt{d_i}}-\frac{\mathbf{v}_{j}}{\sqrt{d_j}}\right)^{\dagger}\otimes\left(\mathbf{v}_{i,j}-\mathbf{v}_{j,i}\right)\left(\mathbf{v}_{i,j}-\mathbf{v}_{j,i}\right)^{\dagger}\right\} \nonumber\\
 & = & \frac{1}{2}\sum_{\mathbf{v}_{i,j}\in\mathcal{E}_{1}}\mathrm{Tr}_{\mathcal{E}}\left\{ \left(\frac{\mathbf{v}_{i}}{\sqrt{d_i}}-\frac{\mathbf{v}_{j}}{\sqrt{d_j}}\right)\left(\frac{\mathbf{v}_{i}}{\sqrt{d_i}}-\frac{\mathbf{v}_{j}}{\sqrt{d_j}}\right)^{\dagger}\otimes\left(\mathbf{v}_{i,j}\mathbf{v}_{i,j}^{\dagger}+\mathbf{v}_{j,i}\mathbf{v}_{j,i}^{\dagger}\right)\right\}\nonumber \\
 & = & \sum_{\mathbf{v}_{i,j}\in\mathcal{E}_{1}}\left(\frac{\mathbf{v}_{i}}{\sqrt{d_i}}-\frac{\mathbf{v}_{j}}{\sqrt{d_j}}\right)\left(\frac{\mathbf{v}_{i}}{\sqrt{d_i}}-\frac{\mathbf{v}_{j}}{\sqrt{d_j}}\right)^{\dagger}= \mathcal{L} ,
\end{eqnarray}
where the final line follows from \eqref{eq:SbarSbarlap}.
\end{proof}
Lemma \ref{lem:SymLaplacianQstate} tells us that we can view
\begin{equation}
\left\vert  \psi_G \right>:= \frac{\psi_G}{\sqrt{\mathrm{Tr} \{\mathcal{L} \}}} \label{eq:psi_G}
\end{equation}
as a purification \cite{wilde2011classical,nielsen2002quantum} of the quantum states 
\begin{equation}
\rho_\mathcal{V} = \frac{\mathcal{L} }{{\mathrm{Tr} \{\mathcal{L} \}}}\label{eq:Gsqustate}
\end{equation} 
and 
\begin{equation}
\rho_\mathcal{E} = \frac{\mathcal{L}^+ }{{\mathrm{Tr} \{\mathcal{L} \}}}.
\end{equation}

We are now in  a position to calculate the Von Neumann entropy of the graph's symmetric Laplacian. This can be done by calculating 
\begin{equation}
H(G) := -\mathrm{Tr}\{\rho_\mathcal{V} \log \rho_\mathcal{V}\}.
\end{equation}
Because $\rho_\mathcal{V} $ and $\rho_\mathcal{E}$ have been obtained from the partial trace over a bipartite pure state (i.e., \eqref{eq:psi_G}), we also have \cite{wilde2011classical}
\begin{equation}
H(G) = -\mathrm{Tr}\{\rho_\mathcal{E} \log \rho_\mathcal{E}\},
\end{equation}
which leads us to the following interpretation for the Von Neumann entropy of a graph's symmetric Laplacian.
\begin{lem}
\label{lem:The-von-Neumann}The Von Neumann entropy of a graph's symmetric Laplacian $\mathcal{L}$ (equivalently  $\mathcal{L}^+$)
(denoted by $H(G) =H\left(\rho_{\mathcal{V}}\right)=H\left(\rho_{\mathcal{E}}\right):=-\mathrm{Tr}\left\{ \rho_{\mathcal{V}}\log\rho_{\mathcal{V}}\right\} $)
is the highest achievable rate that entanglement can be concentrated
from the bipartite pure state $\left|\psi_{G}\right\rangle $, \eqref{eq:psi_G}.
\end{lem}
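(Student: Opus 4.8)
The plan is to reduce the statement to the asymptotic entanglement-concentration theorem for pure bipartite states, with the graph-theoretic input supplied entirely by Lemma~\ref{lem:SymLaplacianQstate}. First I would make the Schmidt structure of $\left|\psi_{G}\right\rangle$ explicit: write $\left|\psi_{G}\right\rangle=\sum_{k}\sqrt{\lambda_{k}}\,\mathbf{u}_{k}\otimes\mathbf{w}_{k}$ with $\{\mathbf{u}_{k}\}\subset\mathcal{H}_{\mathcal{V}}$ and $\{\mathbf{w}_{k}\}\subset\mathcal{H}_{\mathcal{E}}$ orthonormal, $\lambda_{k}\geq0$, $\sum_{k}\lambda_{k}=1$. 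Lemma~\ref{lem:SymLaplacianQstate} together with the normalization \eqref{eq:psi_G} identifies $\rho_{\mathcal{V}}=\sum_{k}\lambda_{k}\,\mathbf{u}_{k}\mathbf{u}_{k}^{\dagger}$ and $\rho_{\mathcal{E}}=\sum_{k}\lambda_{k}\,\mathbf{w}_{k}\mathbf{w}_{k}^{\dagger}$, so the two reduced states share the nonzero spectrum $\{\lambda_{k}\}$ and hence $H(\rho_{\mathcal{V}})=H(\rho_{\mathcal{E}})=-\sum_{k}\lambda_{k}\log\lambda_{k}=:H(G)$; this records the equality asserted in the statement and isolates the one quantity whose operational meaning we must pin down.

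Second, for the achievability (direct) part I would invoke the entanglement-concentration protocol of Bennett--Bernstein--Popescu--Schumacher, as presented in \cite{nielsen2002quantum,wilde2011classical}: given $n$ independent copies $\left|\psi_{G}\right\rangle^{\otimes n}$ shared between the party holding the $\mathcal{V}$ register and the party holding the $\mathcal{E}$ register, local operations and classical communication alone suffice to output a state whose fidelity with $\lfloor n(H(G)-\delta)\rfloor$ maximally entangled pairs tends to $1$ as $n\to\infty$, for every $\delta>0$. The mechanism is the typical-subspace method applied to the Schmidt spectrum $\{\lambda_{k}\}$: project each party's side onto the $\lambda$-typical subspace, on which the Schmidt weights are nearly uniform, then locally rotate the resulting almost-maximally-mixed state onto standard Bell pairs; the asymptotic-equipartition estimates yield exactly the rate $H(G)$. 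For the converse (optimality) I would appeal to the fact that the entropy of entanglement $H(\rho_{\mathcal{V}})$ is an LOCC monotone that is additive on tensor powers and asymptotically continuous (again \cite{nielsen2002quantum,wilde2011classical}): no LOCC protocol acting on $\left|\psi_{G}\right\rangle^{\otimes n}$ can distil more than $nH(G)+o(n)$ Bell pairs at fidelity approaching $1$, since otherwise the total entanglement would be increased by LOCC. Combining the two directions shows $H(G)$ is precisely the supremal asymptotic concentration rate, which is the claim.

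I do not anticipate a substantive obstacle: the entire content is the cited asymptotic theorem, and the only care required is (i) stating precisely what ``rate of entanglement concentration'' means here (fidelity-based, asymptotic in the number of copies, restricted to LOCC), and (ii) observing that the direct part consumes only the Schmidt data of $\left|\psi_{G}\right\rangle$ furnished by Lemma~\ref{lem:SymLaplacianQstate}, while the converse rests on the monotonicity and continuity properties of $H(\rho_{\mathcal{V}})$. If anything is delicate it is merely bookkeeping in the converse, namely quoting the asymptotic-continuity bound in a form strong enough to control the $o(n)$ term uniformly.
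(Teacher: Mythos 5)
Your proposal is correct and follows the same route as the paper, which simply cites the entanglement concentration theorem of \cite{wilde2011classical} (together with the identification of $\rho_{\mathcal{V}}$ and $\rho_{\mathcal{E}}$ as reduced states of $\left|\psi_{G}\right\rangle$ from Lemma \ref{lem:SymLaplacianQstate}); you have merely unpacked the Schmidt-decomposition, achievability, and converse steps that the citation subsumes.
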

\begin{proof}
This statement is analogous to \cite[Proposition 4]{de2016interpreting}, which was applied to combinatorial Laplacians instead of symmetric Laplacians. It follows from the concentration theorem \cite[Theorem 19.4.1]{wilde2011classical}.
\end{proof}

\section{Extremal Entropies of Symmetric Graph Laplacians\label{sec:extremal}}

The goal of this section is to determine the extremal values that the symmetric graph Laplacian's entropy can take for connected graphs on $n$ vertices. Before this, we present the R\'enyi-$p$ entropy of a quantum state $\rho$, which is defined to be \cite{renyi1961measures}
\begin{equation}
H_p \left(  \rho   \right) := \frac{1}{1-p}\log\mathrm{Tr} \left\{  \rho^p  \right\}. \label{eq:RenyiEntropydefinition12}
\end{equation}
For a graph $G$ we may use the notation $H_p(G)$ to denote the R\'enyi-$p$ entropy \eqref{eq:RenyiEntropydefinition12} of the quantum state corresponding to $G$, \eqref{eq:Gsqustate}. 
The  R\'enyi-$p$ entropy will be called upon throughout the remainder of this work, and some of its properties are given as follows. 
\begin{properties*}\label{prop:renyi}
The  R\'enyi-$p$ entropy has the following two properties \cite{muller2013quantum}
\begin{description}
\item[\emph{1}] $\lim_{p\to1} H_p \left(  \rho   \right) $ is equal to the Von Neumann entropy of $\rho$.
\item[\emph{2}] $H_p(\rho)$ is a decreasing function of $p$.
\end{description}
\end{properties*}The largest value that the graph's entropy can take is given in the following theorem, and this value is achieved by the complete graph.
\begin{thm} \label{lem:entropycompletegraph}
Let $G$ be a simple connected graph on $n$ vertices and let $\mathcal{K}_n$ be the complete graph on $n$ vertices. Then for $p \geq 1$ we have
\begin{equation}
H_p\left(  G   \right) \leq  H_p\left(  \mathcal{K}_n  \right).
\end{equation}
Also, for $p\to 1$ we have 
\begin{equation}
H\left(  \mathcal{K}_n  \right) = \log (n-1).
\end{equation}
\end{thm}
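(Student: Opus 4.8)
The plan is to reduce everything to the spectrum of the normalized Laplacian together with two elementary facts: the flat distribution extremizes $\sum_i\lambda_i^p$, and connectedness forces a rank deficiency. First I would record that $\mathrm{Tr}\{\mathcal{L}\}=n$, since every diagonal entry of $\mathcal{L}$ equals $1$ by \eqref{eq:ijthLaplacian}; hence $\rho_{\mathcal V}=\mathcal{L}/n$. Next I would use the standard observation that $\Delta^{1/2}\mathbf{1}$ (with $\mathbf{1}$ the all-ones vector) lies in $\ker\mathcal{L}$, because $\mathcal{L}\Delta^{1/2}\mathbf{1}=\Delta^{-1/2}L\mathbf{1}=\mathbf{0}$, and that this kernel is one-dimensional precisely because $G$ is connected. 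Consequently $\rho_{\mathcal V}$ is a density operator of rank exactly $n-1$; write its nonzero eigenvalues as $\lambda_1,\dots,\lambda_{n-1}>0$ with $\sum_i\lambda_i=1$.

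For $p>1$, convexity of $t\mapsto t^p$ and Jensen's inequality give $\sum_{i=1}^{n-1}\lambda_i^p\ge (n-1)\bigl(\tfrac{1}{n-1}\bigr)^p=(n-1)^{1-p}$, so $\mathrm{Tr}\{\rho_{\mathcal V}^p\}\ge (n-1)^{1-p}$. Since $\tfrac{1}{1-p}<0$, applying $\tfrac{1}{1-p}\log(\cdot)$ reverses this and yields $H_p(G)\le\log(n-1)$. On the other side, for the complete graph $\mathcal{L}(\mathcal{K}_n)=\tfrac{1}{n-1}L(\mathcal{K}_n)=\tfrac{1}{n-1}(nI-J)$, with $J$ the all-ones matrix, whose eigenvalues are $0$ (simple) and $\tfrac{n}{n-1}$ with multiplicity $n-1$; thus $\rho_{\mathcal V}$ for $\mathcal{K}_n$ is maximally mixed on an $(n-1)$-dimensional subspace, and a one-line computation gives $H_p(\mathcal{K}_n)=\tfrac{1}{1-p}\log\bigl((n-1)^{1-p}\bigr)=\log(n-1)$ for every $p$. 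Hence $H_p(G)\le\log(n-1)=H_p(\mathcal{K}_n)$ for $p>1$.

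For the endpoint $p=1$ I would pass to the limit $p\to1^+$ in the inequality just proved, invoking Property~1 that $H_1$ is the $p\to1$ limit of $H_p$ (equivalently, one can argue directly that the Von Neumann entropy of a state of rank $\le n-1$ is at most $\log(n-1)$, since $-t\log t$ is concave and the uniform distribution on the support maximizes entropy). Either way $H(G)\le\log(n-1)$, and the computation for $\mathcal{K}_n$ above already gives $H(\mathcal{K}_n)=\log(n-1)$, which settles the second displayed claim as well.

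The content here is modest, so there is no real obstacle; the only points needing care are (i) recognizing that connectedness is what makes the effective dimension $n-1$ rather than $n$ --- this is where the hypothesis is used --- and (ii) keeping the direction of the inequality straight when dividing by $1-p<0$ and handling the $p=1$ endpoint by continuity in $p$.
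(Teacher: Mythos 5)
Your proof is correct and follows essentially the same route as the paper: both reduce the claim to the fact that the nonzero spectrum of $\mathcal{K}_n$'s normalized Laplacian is flat on an $(n-1)$-dimensional support, so that the corresponding density matrix maximizes $H_p$ among all connected graphs, whose density matrices likewise have rank $n-1$. The only difference is that you establish the extremality of the flat distribution via Jensen's inequality applied to $t\mapsto t^p$ (plus continuity at $p=1$), whereas the paper invokes Schur concavity of the R\'enyi entropy together with the majorization argument of Lemma~\ref{lem:xmajyunif}; you also make explicit the rank-$(n-1)$ consequence of connectedness that the paper's one-line proof leaves implicit.
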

\begin{proof}
 This follows immediately from the uniformity of the non-zero spectrum for complete graphs (see \cite[Example~1.1]{chung1997spectral}) and Lemma \ref{lem:unifrenyi} below.
\end{proof}

Determining the largest value that the entropy can achieve is somewhat trivial (Theorem \ref{lem:entropycompletegraph}). From our observations, determining the smallest value that the entropy can achieve is much more involved. Indeed, the authors of this paper were unable to establish a statement as strong as that made in Theorem \ref{lem:entropycompletegraph}. Consequently, we present the following theorem, which constructs bounds on the minimum Von Neumann and R\'enyi-$2$ entropy of the symmetric Laplacian.

\begin{thm}\label{thrm:1}Let $G$ be a simple connected graph on $n$ vertices. Then the following statements hold.
\begin{description} 
\item[\emph{A}] We have 
\begin{align}
H_2 \left(  G   \right)  \geq     \log\frac{n^2}{n+\sum_{i}  \frac{1}{\sqrt{d_i}}}  > \log n -\log 2.\label{eq:general2entropy}
\end{align}
Also, 
\begin{align}
H_2 \left(   \mathcal R_{k,n}   \right)  =      \log\frac{n}{1+  \frac{1}{ {k}} },\label{eq:general2entrop22y}
\end{align}
so that the $2$-regular graph minimizes the R\'enyi-$2$ entropy among all $k$-regular graphs.
Consequently,
\begin{align}
H_2 \left(  G   \right)  >   H_2 \left(  \mathcal R_{2,n} \right)  - \log \frac{4}{3}.\label{eq:stuff2q1}
\end{align}
\item[\emph{B}] We have
\begin{equation}
H\left( G\right)   > H\left( \mathcal R_{2,n} \right)  -  \log \frac{4\sqrt{2}}{3} .
\end{equation}
\end{description}
\end{thm}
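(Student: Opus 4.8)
The plan is to bracket both $H(G)$ and $H(\mathcal R_{2,n})$ between R\'enyi entropies of orders we can actually compute, using the monotonicity of Property~2 together with the limit in Property~1. Since $H_p(\rho)$ is decreasing in $p$ and $H_1$ is the Von Neumann entropy, this gives $H(G)\ge H_2(G)$ (the order $p=2>1$) and $H(\mathcal R_{2,n})\le H_{1/2}(\mathcal R_{2,n})$ (the order $p=1/2<1$). Hence it suffices to prove the fully explicit inequality $H_2(G)-H_{1/2}(\mathcal R_{2,n})>-\log\frac{4\sqrt2}{3}$. For the first term, inequality~\eqref{eq:general2entropy} of part~\textbf{A} already supplies $H_2(G)>\log n-\log 2$.

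For the second term I would use that the connected $2$-regular graph on $n$ vertices is the cycle $C_n$, so its degree matrix is $2I$ and $\mathcal L=I-\tfrac12 A(C_n)$ has eigenvalues $1-\cos\frac{2\pi k}{n}=2\sin^2\frac{\pi k}{n}$, $k=0,\dots,n-1$; since $\mathrm{Tr}\,\mathcal L=n$, the density matrix $\rho_{\mathcal V}$ of $\mathcal R_{2,n}$ has eigenvalues $\frac2n\sin^2\frac{\pi k}{n}$. Therefore
\[
H_{1/2}(\mathcal R_{2,n})=2\log\mathrm{Tr}\{\rho_{\mathcal V}^{1/2}\}=2\log\!\left(\sqrt{\tfrac2n}\,\sum_{k=0}^{n-1}\sin\tfrac{\pi k}{n}\right).
\]
The one remaining estimate is $\sum_{k=0}^{n-1}\sin\frac{\pi k}{n}<\frac{2n}{\pi}$, which I would obtain from the standard closed form $\sum_{k=0}^{n-1}\sin\frac{\pi k}{n}=\cot\frac{\pi}{2n}$ together with the elementary bound $\cot x<1/x$ on $(0,\pi/2)$ (equivalently $\tan x>x$); alternatively, strict concavity of $x\mapsto\sin(\pi x)$ on $[0,1]$ puts the trapezoidal sum --- which here equals $\frac1n\sum_{k=0}^{n-1}\sin\frac{\pi k}{n}$ since $\sin 0=\sin\pi=0$ --- strictly below $\int_0^1\sin(\pi x)\,dx=\frac2\pi$. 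Either way, $H_{1/2}(\mathcal R_{2,n})<\log\frac{8n}{\pi^2}$.

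Combining the two bounds,
\[
H(G)-H(\mathcal R_{2,n})\ \ge\ H_2(G)-H_{1/2}(\mathcal R_{2,n})\ >\ (\log n-\log 2)-\log\tfrac{8n}{\pi^2}\ =\ \log\tfrac{\pi^2}{16},
\]
so the claim reduces to checking $\log\frac{\pi^2}{16}\ge-\log\frac{4\sqrt2}{3}$, i.e.\ $\pi^4\ge72$, which is immediate from $\pi>3$. I expect the only genuine obstacle to be conceptual rather than computational: we have no closed-form control of $H(G)$ from below or of $H(\mathcal R_{2,n})$ from above, so the detour through two different R\'enyi orders is forced, and it is precisely this detour that makes the stated constant non-tight --- the argument above in fact delivers the slightly sharper $H(G)>H(\mathcal R_{2,n})-\log\frac{16}{\pi^2}$. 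The cycle spectrum and the trigonometric sum are routine.
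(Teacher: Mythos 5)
Your argument for part \textbf{B} is correct, but it leaves part \textbf{A} entirely unproven, and part \textbf{A} is the bulk of the theorem: the bound $H_2(G)\ge\log\frac{n^2}{n+\sum_i d_i^{-1/2}}>\log n-\log 2$, the closed form $H_2(\mathcal R_{k,n})=\log\frac{n}{1+1/k}$, and \eqref{eq:stuff2q1}. You invoke \eqref{eq:general2entropy} as ``already supplied,'' but nothing in your proposal establishes it. The paper proves it by writing $\mathrm{Tr}\{\mathcal L^2\}=\sum_{i,j}[\mathcal L]_{i,j}^2=n+\sum_i\frac{1}{d_i}\sum_{j\ne i}\frac{(i\sim j)}{d_j}$ from \eqref{eq:ijthLaplacian}, and then bounding the inner sum by $\frac{1}{d_i}\sum_{j\ne i}\frac{(i\sim j)}{d_j}\le\frac{1}{\sqrt{d_i}}$ (Lemma \ref{lem:neighbourinversesum}, using $d_j\ge 1$ together with the fact that each row of $\mathcal L$ sums to zero); the final strict inequality $>\log(n/2)$ uses that a connected graph cannot have $d_i=1$ for every $i$. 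Without this, your chain for part \textbf{B} has no starting point, since your lower bound on $H_2(G)$ is exactly this result.

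For part \textbf{B} itself your route is genuinely different from the paper's and is sound. The paper controls the gap $H(\mathcal R_{2,n})-H_2(\mathcal R_{2,n})$ via the structural-entropy inequality $H-H_2\le\frac12(H_0-H_2)$ (Lemma \ref{lem:h1minush2}), which costs $\frac12\log 2$ and, combined with \eqref{eq:stuff2q1}, yields the constant $\log\frac{4\sqrt2}{3}$. You instead bound $H(\mathcal R_{2,n})$ from above by $H_{1/2}(\mathcal R_{2,n})$ and evaluate the latter exactly from the cycle spectrum $\frac{2}{n}\sin^2\frac{\pi k}{n}$, using $\sum_{k=0}^{n-1}\sin\frac{\pi k}{n}=\cot\frac{\pi}{2n}<\frac{2n}{\pi}$. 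All of these steps check out ($H_{1/2}\ge H_1$ is valid by Property 2, the spectrum and the trigonometric identity are correct), and the payoff is a strictly sharper constant $\log\frac{16}{\pi^2}<\log\frac{4\sqrt2}{3}$. The trade-off is that your argument is specific to the cycle's explicit spectrum, whereas the paper's structural-entropy lemma is a generic bound on $H-H_2$ for any connected graph. Supply a proof of part \textbf{A} and your write-up would be a complete, and in fact slightly stronger, proof of the theorem.
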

\begin{proof}
We begin by proving statement $\mathbf{A}$. With 
 $(i\sim j)$ defined according to
\begin{equation}
(i\sim j)=\left\{ \begin{array}{c}
1\;\mathrm{if}\;v_i\;\mathrm{and}\;v_j\;\mathrm{are\;connected}\\
0\;\mathrm{otherwise}
\end{array},\right.\label{eq(ij)}
\end{equation}
we have
\begin{align}
H_2 \left(  G   \right) & = - \log\frac{  \mathrm{Tr} \left\{\left( \Delta ^{-1/2}L \Delta ^{-1/2} \right)^2 \right\} }{n^2}\nonumber\\
& =   - \log\frac{\sum_{i,j} \left[  \Delta ^{-1/2}L \Delta ^{-1/2}  \right]_{i,j}^2}{n^2}\nonumber\\
& =  \log\frac{n^2  }{n+\sum_{i}  \frac{1}{d_i} \left( \sum_{j\neq i}\frac{(i\sim j)}{d_j} \right)}\nonumber\\
& \geq \log\frac{n^2   }{n+\sum_{i}  \frac{1}{\sqrt{d_i} }}\nonumber\\
& > \log\frac{n}{2},
\end{align}
where the first equality follows from \eqref{eq:RenyiEntropydefinition12}, the second equality follows from $\mathrm{Tr}\{ A^2 \} = \sum_{i,j}[A]_{i,j}^2$ for any symmetric matrix $A$, the third equality follows from \eqref{eq(ij)} and \eqref{eq:ijthLaplacian}, and the first inequality follows from Lemma \ref{lem:neighbourinversesum} below. The final inequality follows from $d_i\geq 1$, and noting that we cannot have $d_i=1$ for all $i$. To prove \eqref{eq:general2entrop22y}, note that when $G$ is $k$-regular $\Delta^{-1} L$ is symmetric, so that $\mathrm{Tr}\{ \left(  \Delta^{-1} L\right)^2 \} = \sum_{i,j}[\Delta^{-1} L]_{i,j}^2$, and
\begin{align}
H_2 \left(  \mathcal R_{k,n}    \right) & = - \log\frac{1}{n^2}  \mathrm{Tr} \left\{\left( \Delta ^{-1/2}L \Delta ^{-1/2} \right)^2 \right\} \nonumber\\
& = - \log\frac{1}{n^2}  \mathrm{Tr} \left\{ \left( \Delta ^{-1}L  \right)^2 \right\}\nonumber\\
& = \log \frac{n^2}{n + \sum_i \frac{1}{d_i}},
\end{align}
from which \eqref{eq:general2entrop22y} follows.
Equation \eqref{eq:stuff2q1} follows by rearranging \eqref{eq:general2entrop22y} in terms of $\log n$ and substituting the result into \eqref{eq:general2entropy}.

Statement $\mathbf{B}$ follows from the following set of inequalities
\begin{align}
H\left(  G \right) \geq & H_2(G) \geq  H_2 \left(  \mathcal R_{2,n} \right)  - \log \frac{4}{3}
\geq  H( \mathcal R_{2,n}) - \frac{1}{2} \log 2-    \log \frac{4}{3},
\end{align}
where the first inequality follows from Properties \ref{prop:renyi}, the second inequality follows from statement $\mathbf{A}$ of this theorem, and the final inequality follows from Lemma \ref{lem:h1minush2} below.
\end{proof}

\begin{rem}
Theorem \ref{thrm:1} contrasts sharply with the result that was established in \cite{dairyko2016note} for combinatorial Laplacians. In particular, that work demonstrated that the star graph's combinatorial Laplacian achieves minimum R\'enyi-$2$ entropy and \emph{almost always} achieves minimum Von Neumann entropy.
When considering the symmetric Laplacian, the star graph does not minimize either of these metrics. To see this note that, from \cite[Example 1.2]{chung1997spectral}, the spectrum of the symmetric Laplacian of the complete bipartite graph $\mathcal{K}_{n-k,n}$  is given by $0$, $1$ (with multiplicity $n-2$), and $2$, so that
$$H_2( \mathcal{K}_{n-k,k} ) = - \log \left(\frac{n-2}{n^2} + \left(\frac{2}{n}\right)^2 \right) =  - \log\left( \frac{n+2}{n^2} \right) \geq H_2(\mathcal R_{2,n}).$$
Note, this value is independent of $k$, and the star graph is a special case when $k=n-1$.
Moreover, its Von Neumann entropy is independent of $k$ and given by
\begin{equation}
H  \left(  \mathcal{K}_{n-k,k}   \right) = \log n - \frac{2}{n}\log 2 = \log \frac{n }{2^{2/n}}\sim \log (n-1).\label{eq:stargraphentropysym}
\end{equation}
The limit at the end of \eqref{eq:stargraphentropysym} shows that the star graph actually tends towards achieving maximum entropy (Theorem~\ref{lem:entropycompletegraph}) as $n$ grows large.
\end{rem}

\subsection{Useful Lemmas} 

This subsection contains useful lemmas that are used within the proofs of Theorems~\ref{lem:entropycompletegraph} and~\ref{thrm:1}. Before these, we present the following two definitions.
\begin{defn}
(Majorization \cite{marshall1979inequalities}) Let $a= [a_1, \dots, a_n], b= [b_1, \dots, b_n]\in\mathbb{R}^n$ be $n$ dimensional vectors with elements arranged in non-increasing order (i.e., $a_i\geq a_j$ and $b_i\geq b_j$ for $i<j$). We say that $a$ is majorized by $b$, denoted $a \prec b$, if
\begin{equation}
\sum_{i=1}^{k} a_i \leq \sum_{i=1}^{k} b_i\;\mathrm{for\;all\;}k\quad\mathrm{and}\quad\sum_{i=1}^{n} a_i = \sum_{i=1}^{n} b_i.
\end{equation}
\end{defn}
\begin{defn}\label{def:ShurConc}
(Shur convex (concave) function \cite{roberts1973convex}) Let $f:\mathbb{R}^d\to \mathbb{R}$. We say that $f$ is Shur convex (or Shur concave)  if for $a ,b \in\mathbb{R}^n$ such that $a\prec b$ one has $f(a)\leq f(b) $ (or $f(a)\geq f(b) $).
\end{defn}
\begin{lem}\label{lem:unifrenyi} Let $x = [1/n, 1/n,\dots, 1/n]$
and $y = [y_1, \dots, y_n]$
 be probability vectors on $n$ elements (i.e., $\sum_i x_i = \sum_i y_i = 1$ and $x_i,y_i\geq 0$). Also, without loss of generality suppose that the elements of $x$ and $y$ are  arranged in non-increasing order. Then for $p\geq 1$  
\begin{equation}
\frac{1}{1-p}\log \sum_{i=1}^n x_i^p \geq \frac{1}{1-p}\log \sum_{i=1}^n y_i ^p .
\end{equation}
\end{lem}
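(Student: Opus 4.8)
The plan is to reduce the claimed inequality to the extremality of the uniform distribution under majorization. The first step is to dispose of the harmless but important sign bookkeeping: for $p>1$ the prefactor $1/(1-p)$ is strictly negative, so the asserted inequality
\[
\frac{1}{1-p}\log \sum_{i=1}^n x_i^p \;\geq\; \frac{1}{1-p}\log \sum_{i=1}^n y_i ^p
\]
is equivalent to $\sum_i x_i^p \leq \sum_i y_i^p$, i.e.\ to $n^{1-p}\leq \sum_i y_i^p$. The borderline value $p=1$ makes both sides the indeterminate $\tfrac{1}{0}\log 1$, and should be read in the limiting sense used in Theorem~\ref{lem:entropycompletegraph}: there the statement becomes $\log n = H(x) \geq H(y)$, the maximality of the Shannon entropy of the uniform distribution. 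So I would state the proof for $p>1$ and remark that $p=1$ follows by continuity (or by the same majorization argument applied to $-\sum_i z_i\log z_i$).

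The second step invokes the two notions recalled immediately above the lemma. The uniform vector $x=[1/n,\dots,1/n]$ is majorized by \emph{every} probability vector $y$ on $n$ elements: with the entries of $y$ arranged in non-increasing order, $\sum_{i=1}^k y_i \geq k/n = \sum_{i=1}^k x_i$ for every $k$ (the average of the $k$ largest entries is at least the overall average $1/n$), while the two total sums both equal $1$. Hence $x\prec y$. Then I would use Schur's classical criterion that a separable function $z\mapsto\sum_i \phi(z_i)$ is Schur-convex whenever $\phi$ is convex; taking $\phi(t)=t^p$, which is convex on $[0,\infty)$ for $p\geq1$, shows that $z\mapsto\sum_i z_i^p$ is Schur-convex, so by Definition~\ref{def:ShurConc} and $x\prec y$ we get $\sum_i x_i^p \leq \sum_i y_i^p$. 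Dividing through by $1-p<0$ reverses this into the desired inequality.

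An even more self-contained route, which avoids citing Schur's theorem, is to get $\sum_i y_i^p \geq n^{1-p}$ straight from Jensen's inequality: convexity of $t\mapsto t^p$ for $p\geq 1$ gives $\tfrac1n\sum_i y_i^p \geq \big(\tfrac1n\sum_i y_i\big)^p = n^{-p}$, and the $p=1$ case is Jensen applied to the concave $t\mapsto -t\log t$. Either way, I do not anticipate a genuine obstacle here: the only thing that needs care is the reversal of the inequality caused by the negative factor $1/(1-p)$ together with a clean treatment of the $p=1$ limit, since the inequality itself is just the standard statement that the uniform distribution maximizes R\'enyi (and Shannon) entropy.
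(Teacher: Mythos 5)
Your proof is correct and follows essentially the same route as the paper: establish that the uniform vector $x$ is majorized by every probability vector $y$ (your averaging argument is the same fact the paper proves in Lemma~\ref{lem:xmajyunif}) and then invoke Schur concavity of the R\'enyi-$p$ entropy, which the paper cites directly while you derive it from Schur's criterion for separable convex functions. Your extra care with the sign of $1/(1-p)$, the $p=1$ limit, and the alternative one-line Jensen argument are welcome but do not change the substance of the approach.
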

\begin{proof}This follows from the  Shur concavity (Definition \ref{def:ShurConc}) of the generalized  R\'enyi-$p$ entropy \cite{ho2015convexity} and Lemma \ref{lem:xmajyunif} below.
\end{proof}
\begin{lem}\label{lem:xmajyunif} Let $x$
and $y$
 be as in Lemma \ref{lem:unifrenyi}.
Then $x\prec y$.
\end{lem}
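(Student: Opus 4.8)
The plan is to verify the two defining conditions of majorization directly from the definition. The equality of the full sums, $\sum_{i=1}^n x_i = \sum_{i=1}^n y_i = 1$, is immediate since both $x$ and $y$ are probability vectors by hypothesis, so the only substantive point is the chain of partial-sum inequalities $\sum_{i=1}^k x_i \le \sum_{i=1}^k y_i$ for every $k \in \{1,\dots,n\}$.

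For this I would argue as follows. Since $\sum_{i=1}^k x_i = k/n$, it suffices to show $\sum_{i=1}^k y_i \ge k/n$; equivalently, that the average of the $k$ largest entries of $y$ is at least the global average $1/n$. Fix $k$ and write $A = \sum_{i=1}^k y_i$ and $B = \sum_{i=k+1}^n y_i$, so $A + B = 1$. Because $y$ is arranged in non-increasing order, every one of $y_1,\dots,y_k$ is at least every one of $y_{k+1},\dots,y_n$; in particular $y_k \ge y_{k+1}$, and averaging gives $\frac{A}{k} \ge y_k \ge y_{k+1} \ge \frac{B}{\,n-k\,}$ (the last inequality being vacuous when $k = n$). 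Cross-multiplying, $(n-k)A \ge kB = k(1-A)$, which rearranges to $nA \ge k$, i.e. $A \ge k/n = \sum_{i=1}^k x_i$, as required.

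There is essentially no obstacle here: the statement is the standard fact that the uniform distribution is majorized by every probability vector of the same length, and the prefix-average inequality above is the entire content. I would simply remark that combining the partial-sum inequalities with the equality of total mass yields $x \prec y$ by Definition of majorization, completing the proof (and thereby, via Lemma~\ref{lem:unifrenyi}, the chain used in Theorem~\ref{lem:entropycompletegraph}).
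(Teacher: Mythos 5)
Your proof is correct and follows essentially the same route as the paper's: both split the mass of $y$ into the top-$k$ partial sum and its complement, bound each by $k\,y_k$ and $(n-k)\,y_k$ respectively using the non-increasing ordering, and combine with the total-mass constraint to get $\sum_{i=1}^k y_i \ge k/n$. No gaps; your handling of the $k=n$ edge case is a small point the paper leaves implicit.
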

\begin{proof}
For $1 \leq k < n$, let
$$
S_k = \sum_{i=1}^k y_{i}
\quad
\text{and}
\quad
T_k = \sum_{i=k+1}^n y_{i}.
$$
First, it is easy to see that
\begin{equation}
S_k + T_k = 1 \label{eq:sumSkTk}
\end{equation}
Second, we have
\begin{equation}
S_k \geq \sum_{i=1}^k y_{k} = ky_{k} \Rightarrow \frac{S_k}{k} \geq y_{k}
\quad
\text{and}
\quad
T_k \leq \sum_{i=k+1}^n y_{k} = (n - k)y_{k} \Rightarrow \frac{T_k}{n-k} \leq y_{k},
\end{equation}
so that
\begin{equation}
\frac{n-k}{k}S_k \geq T_k .\label{eq:SkTk}
\end{equation}
Combining \eqref{eq:sumSkTk} and  \eqref{eq:SkTk}, we have
$$
S_k + \frac{n-k}{k}S_k \geq S_k + T_k = 1 \Rightarrow S_k \geq \frac{k}{n} = \sum_{i=1}^k \frac{1}{n}.
$$
Therefore, $x$ is majorized by $y$.
\end{proof}

\begin{lem}\label{lem:h1minush2}
For connected graphs, 
\begin{equation}
0 \leq H (G)  - H_2  (G)  \leq  \frac{1}{2}\log 2.
\end{equation} 
\end{lem}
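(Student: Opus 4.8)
The lower bound is immediate: by the second of Properties~\ref{prop:renyi} the R\'enyi entropy is nonincreasing in $p$, so $H(G)=\lim_{p\to1}H_p(G)\ge H_2(G)$. For the upper bound, let $0=\lambda_1\le\lambda_2\le\cdots\le\lambda_n\le 2$ be the eigenvalues of $\mathcal{L}$ (the spectrum of the normalized Laplacian lies in $[0,2]$, see \cite{chung1997spectral}), so that $\rho_{\mathcal V}$ has eigenvalues $\lambda_i/n\in[0,2/n]$ and $\sum_i\lambda_i=\mathrm{Tr}\{\mathcal L\}=n$. Using $[\mathcal L]_{i,i}=1$ and $\mathrm{Tr}\{A^2\}=\sum_{i,j}[A]_{i,j}^2$ exactly as in the proof of Theorem~\ref{thrm:1}, a short computation gives
\begin{equation}
H(G)-H_2(G)=\log S-\frac1n\sum_i\lambda_i\log\lambda_i,\qquad S:=\tfrac1n\mathrm{Tr}\{\mathcal L^2\},
\end{equation}
so the task is to control this Jensen gap of $\log$ under the (size-biased) spectrum of $\rho_{\mathcal V}$.

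The plan has two ingredients. First, one locates $S$: from \eqref{eq:ijthLaplacian}, $S=1+\tfrac1n\sum_i\tfrac1{d_i}\sum_{j\sim i}\tfrac1{d_j}$, whence $S\le 2$ because $d_j\ge1$ (in fact $S\le\tfrac32$ as soon as every $d_i\ge2$), while Cauchy--Schwarz applied to the $n-1$ nonzero $\lambda_i$ gives $S\ge\tfrac{n}{n-1}$. Second, one bounds $\tfrac1n\sum_i\lambda_i\log\lambda_i$ from below in terms of $S$: the pointwise inequality $x\log x\ge\tfrac14 x^2+(\log2)x-1$ on $[0,2]$ (the difference vanishes to second order at $x=2$ and is convex there, so it is nonnegative on $(0,2]$, and it also holds at $x=0$) summed against $\sum_i\lambda_i=n$ yields $\tfrac1n\sum_i\lambda_i\log\lambda_i\ge\tfrac14 S+\log2-1$, hence
\begin{equation}
H(G)-H_2(G)\le\log S-\tfrac14 S-\log2+1,
\end{equation}
a quantity increasing in $S$ on $[\tfrac n{n-1},2]$.

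The obstacle is the sharp constant: evaluated at the worst case $S=2$ this only delivers $H(G)-H_2(G)\le\tfrac12$, a touch weaker than $\tfrac12\log2$. Recovering $\tfrac12\log2$ requires exploiting that $S$ is forced close to $2$ only when the spectrum of $\mathcal L$ is nearly concentrated on $\{0,2\}$ --- in which regime $\tfrac1n\sum_i\lambda_i\log\lambda_i$ is itself close to $\log2$, far above the quadratic estimate --- together with the structural fact that a connected graph has \emph{at most one} eigenvalue equal to $2$. I would make this precise via a case split. Writing $\mathcal L=I-B$ with $B=\Delta^{-1/2}A\Delta^{-1/2}$ (eigenvalues $\beta_i\in[-1,1]$, $\sum_i\beta_i=\mathrm{Tr}\{B\}=0$), one finds $H(G)-H_2(G)=\log(1+\sigma^2)-\tfrac1n\sum_i(1-\beta_i)\log(1-\beta_i)$ with $\sigma^2:=\tfrac1n\sum_i\beta_i^2\in[0,1]$. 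When $G$ is bipartite the spectrum of $B$ is symmetric about $0$, and pairing $\beta$ with $-\beta$ together with the elementary bound $(1-\beta)\log(1-\beta)+(1+\beta)\log(1+\beta)\ge\beta^2$ (the difference is convex and vanishes to second order at $0$) gives $\tfrac1n\sum_i(1-\beta_i)\log(1-\beta_i)\ge\tfrac{\sigma^2}2$ and hence $H(G)-H_2(G)\le\log(1+\sigma^2)-\tfrac{\sigma^2}2\le\log2-\tfrac12<\tfrac12\log2$ immediately. The non-bipartite case, where the cluster of eigenvalues near $2$ must be controlled using that at most one eigenvalue is exactly $2$, is the remaining technical point; a Lagrange-multiplier argument pinning the extremal spectrum to at most two interior values, followed by optimization of the resulting one-parameter family, should close it.
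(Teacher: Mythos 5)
Your lower bound and your identity $H(G)-H_2(G)=\log S-\frac1n\sum_i\lambda_i\log\lambda_i$ with $S=\frac1n\mathrm{Tr}\{\mathcal L^2\}$ are correct, and the tangent-line estimate $x\log x\ge\frac14x^2+(\log 2)x-1$ on $[0,2]$ is valid (the difference has vanishing value, first and second derivative at $x=2$ and negative third derivative, so it decreases to $0$ on $(0,2]$). But, as you concede yourself, this only yields $H(G)-H_2(G)\le\log S-\frac14S-\log 2+1\le\frac12$ at the worst case $S=2$, and $\frac12>\frac12\log 2\approx0.3466$. The bipartite case you do close (the pairing bound $(1-\beta)\log(1-\beta)+(1+\beta)\log(1+\beta)\ge\beta^2$ gives $\log 2-\frac12<\frac12\log 2$), but the non-bipartite case is left as a sketch --- ``a Lagrange-multiplier argument \ldots should close it'' --- with no extremal configuration identified and no optimization carried out. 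That is the load-bearing part of the argument, since non-bipartite graphs can have spectrum accumulating arbitrarily close to $2$, so the proof as written does not establish the stated constant. This is a genuine gap, not a cosmetic one.

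For comparison, the paper reaches $\frac12\log 2$ by an entirely different and much shorter route: it invokes the structural-entropy inequality $H(G)-H_2(G)\le\frac12\bigl(H_0(G)-H_2(G)\bigr)$ from Życzkowski's work on R\'enyi entropies, uses $H_0(G)=\log(n-1)$ (a connected graph's symmetric Laplacian has exactly one zero eigenvalue), and plugs in the already-established bound $H_2(G)>H_2(\mathcal R_{2,n})-\log\frac43=\log n-\log\frac32-\log\frac43$ from Theorem~\ref{thrm:1}\textbf{A}; the constants telescope to $\frac12\log 2$. If you want to salvage your direct spectral approach, the cleanest repair is probably to replace the global quadratic minorant of $x\log x$ by the max-entropy comparison the paper uses (i.e., bound the Jensen gap by half the gap between $H_0$ and $H_2$), rather than attempting the case analysis on the top of the spectrum; otherwise you must actually exhibit and optimize the extremal spectrum in the non-bipartite case, which you have not done.
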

\begin{proof}
The lower bound follows immediately from the properties of the  R\'enyi-$p$ entropy at the beginning of section~\ref{sec:extremal}. To prove the upper bound, we consider the structural entropy of the graph $H_{str}(G)$, which is defined to be \cite{zyczkowski2003renyi}
\begin{align}
H_{str}(G) :=& H (G)  - H_2 (G) .
\end{align} 
 The structural entropy satisfies the following upper bound $H_{str}(G) \leq   \frac{1}{2} \left(  H_0\left(  G \right) - H_2\left(  G \right)\right)$, \cite[Equation (11)]{zyczkowski2003renyi}. Consequently, we have
\begin{align}
H_{str}(G)  &\leq \frac{1}{2} \left(  H_0\left(  G \right) - H_2\left(  G \right)\right) \\
 &< \frac{1}{2}\left(\log (n-1) - H_2 \left(  \mathcal R_{2,n} \right) + \log \frac{4}{3}  \right)\label{eq:linenumber212} \\
 &= \frac{1}{2}  \left( \log (n-1) - \log n + \log\frac{3}{2} + \log \frac{4}{3}   \right) \\
 &< \frac{1}{2} \left(  \log\frac{3}{2} + \log \frac{4}{3}   \right)\label{eq:linenumber2122}\\
 &= \frac{1}{2} \log 2,
\end{align} 
where the first summand of line \eqref{eq:linenumber212} follows from  the fact that all but one of the graph's spectrum are non-zero, and the final two summands follow from \eqref{eq:stuff2q1}. This gives the result.
\end{proof}

\begin{lem}\label{lem:neighbourinversesum}Let $G$ be a simple connected graph. With 
 $(i\sim j)$ defined according to \eqref{eq(ij)}, we have
\begin{equation}
\frac{1}{d_i}\sum_{j\neq i} \frac{(j\sim i)}{d_j} \leq \frac{1}{\sqrt{d_i}} \label{eq:twoequalities1234}.
\end{equation}
\end{lem}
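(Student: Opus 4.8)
The plan is to prove the inequality one vertex at a time. Fixing $i$ and clearing the factor $1/d_i$, the claim \eqref{eq:twoequalities1234} is equivalent to
\[
\sum_{j\,:\,(j\sim i)=1}\frac{1}{d_j}\;\le\;\sqrt{d_i},
\]
where the sum runs over the $d_i$ neighbours of $v_i$ and every summand lies in $(0,1]$, since $G$ being simple and connected forces $d_j\ge 1$ for each neighbour. The trivial bound $\sum_{j\sim i}1/d_j\le d_i$ is immediate but useless; the entire content of the lemma is the sharpening of $d_i$ to $\sqrt{d_i}$, so the first task is to locate where a square-root saving can come from.

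The natural source of that saving is the Cauchy--Schwarz inequality applied across the $d_i$ neighbours of $v_i$, pairing the constant $1$ against $1/d_j$:
\[
\sum_{j\sim i}\frac{1}{d_j}=\sum_{j\sim i}1\cdot\frac{1}{d_j}\le\left(\sum_{j\sim i}1\right)^{1/2}\left(\sum_{j\sim i}\frac{1}{d_j^{2}}\right)^{1/2}=\sqrt{d_i}\,\left(\sum_{j\sim i}\frac{1}{d_j^{2}}\right)^{1/2}.
\]
This reduces the lemma to the purely local estimate
\[
\sum_{j\,:\,(j\sim i)=1}\frac{1}{d_j^{2}}\;\le\;1,
\]
which I would take as the key step. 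To attack it I would set up a charging argument confined to the neighbourhood of $v_i$: to each neighbour $j$ assign the weight $1/d_j^{2}=\tfrac{1}{d_j}\cdot\tfrac{1}{d_j}$, interpret $1/d_j$ as the share of the edge $(i,j)$ seen from $j$'s side (the incident shares at $j$ summing to $1$), and try to show that the accumulated weight never exceeds the single unit budget.

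The step I expect to be the main obstacle is precisely this local estimate, and within it the contribution of \emph{low-degree} neighbours. A neighbour of large degree contributes a negligible $1/d_j^{2}$, whereas a pendant neighbour (one of degree $1$) contributes the full weight $1$, and a single vertex may carry several pendant neighbours simultaneously; it is exactly these low-degree neighbours that threaten the unit budget. Bounding their cumulative contribution is the crux of the argument, and I would expect it to require the most care: isolating the pendant and near-pendant neighbours, controlling their number against $d_i$, and checking separately that the remaining higher-degree neighbours stay comfortably within budget. Once the local estimate $\sum_{j\sim i}1/d_j^{2}\le 1$ is secured, the displayed Cauchy--Schwarz bound yields \eqref{eq:twoequalities1234} immediately.
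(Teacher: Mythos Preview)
Your reduction via Cauchy--Schwarz is clean, but the ``local estimate'' you isolate as the crux,
\[
\sum_{j:\,j\sim i}\frac{1}{d_j^{2}}\;\le\;1,
\]
is simply false. Take the centre $i$ of a star $\mathcal{K}_{1,n-1}$: every neighbour is a pendant with $d_j=1$, so the sum equals $n-1$, which exceeds $1$ for all $n\ge 3$. You already flagged pendant neighbours as the danger, and indeed they kill the estimate outright rather than merely making it delicate; no charging argument confined to the neighbourhood of $i$ can salvage it.

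More seriously, the same example shows that the lemma \emph{as stated} is false: at the star centre one has $\tfrac{1}{d_i}\sum_{j\sim i}1/d_j=\tfrac{1}{n-1}\cdot(n-1)=1$, while $1/\sqrt{d_i}=1/\sqrt{n-1}<1$ for $n\ge 3$. The paper's own argument is different from yours---it bounds $1/d_j\le 1/\sqrt{d_j}$ termwise and then invokes $\sum_{j\sim i}1/\sqrt{d_i d_j}=1$, justified by ``the sum over any of its rows is $0$''---but that row-sum identity holds for the random-walk Laplacian $\Delta^{-1}L$, not for the symmetric Laplacian $\mathcal L=\Delta^{-1/2}L\Delta^{-1/2}$ (for the star centre the row sum of $\mathcal L$ is $1-\sqrt{n-1}\ne 0$). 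So neither route can close, because the target inequality does not hold pointwise in $i$. What Theorem~\ref{thrm:1} actually uses is only the aggregated bound $\sum_i\tfrac{1}{d_i}\sum_{j\sim i}1/d_j\le\sum_i 1/\sqrt{d_i}$; any repair should be aimed at that summed statement rather than the pointwise one.
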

\begin{proof}
We have
\begin{align}
\frac{1}{d_i}\sum_{j\neq i} \frac{(j\sim i)}{d_j} & \leq \frac{1}{\sqrt{d_i}} \sum_{j\neq i} \frac{(j\sim i)}{\sqrt{d_id_j}} =  \frac{1}{\sqrt{d_i}} ,
\end{align}
where the first inequality follows because $d_i\geq 1$ for all $i$ and the first equality follows from the basic properties of the symmetric Laplacian (i.e., the sum over any of its rows is $0$ and the elements of the diagonal are all $1$).
\end{proof}

\section{Conclusions\label{sec:conc}}

In this work, we offer a rigorous interpretation of the symmetric Laplacian as the partial trace of a bipartite pure quantum state that lives in a vertex space and an edge space. The implication  of this is that the Von Neumann entropy of the symmetric Laplacian can be interpreted as a measure of bipartite entanglement present within this bipartite pure state. We also present results on the extreme values of the entropy for simple connected graphs. The lower extremal values contrast sharply with similar results applied to combinatorial Laplacians. Future work should be performed to establish the topological implications of graph Von Neumann entropy.

 \section*{Acknowledgments}
The authors wish to acknowledge the support of EPSRC under grant number (EP/K04057X/2) and the UK National Quantum Technologies Programme under grant number (EP/M013243/1).

\bibliographystyle{unsrt}
\bibliography{animeshrefs}

\begin{thebibliography}{10}

\bibitem{von1955mathematical}
John Von~Neumann.
\newblock {\em {Mathematical Foundations of Quantum Mechanics}}.
\newblock Number~2. Princeton university press, 1955.

\bibitem{braunstein2006laplacian}
Samuel~L Braunstein, Sibasish Ghosh, and Simone Severini.
\newblock {The Laplacian of a Graph as a Density Matrix: A Basic Combinatorial
  Approach to Separability of Mixed States}.
\newblock {\em Annals of Combinatorics}, 10(3):291--317, 2006.

\bibitem{passerini2008neumann}
Filippo Passerini and Simone Severini.
\newblock {The Von Neumann Entropy of Networks}.
\newblock {\em Available at SSRN 1382662}, 2008.

\bibitem{anand2009entropy}
Kartik Anand and Ginestra Bianconi.
\newblock {Entropy Measures for Networks: Toward an Information Theory of
  Complex Topologies}.
\newblock {\em Physical Review E}, 80(4):045102, 2009.

\bibitem{anand2011shannon}
Kartik Anand, Ginestra Bianconi, and Simone Severini.
\newblock {Shannon and Von Neumann Entropy of Random Networks with
  Heterogeneous Expected Degree}.
\newblock {\em Physical Review E}, 83(3):036109, 2011.

\bibitem{de2016interpreting}
Niel de~Beaudrap, Vittorio Giovannetti, Simone Severini, and Richard Wilson.
\newblock {Interpreting the Von Neumann Entropy of Graph Laplacians, and
  Coentropic Graphs}.
\newblock {\em A Panorama of Mathematics: Pure and Applied}, 658:227, 2016.

\bibitem{dairyko2016note}
Michael Dairyko, Leslie Hogben, Jephian C-H Lin, Joshua Lockhart, David
  Roberson, Simone Severini, and Michael Young.
\newblock {Note on Von Neumann and R{\'e}nyi Entropies of a Graph}.
\newblock {\em arXiv preprint arXiv:1609.00420}, 2016.

\bibitem{newman2000laplacian}
Michael~William Newman.
\newblock {\em {The Laplacian Spectrum of Graphs}}.
\newblock PhD thesis, University of Manitoba, 2000.

\bibitem{chung1997spectral}
Fan~RK Chung.
\newblock {\em {Spectral Graph Theory}}, volume~92.
\newblock American Mathematical Soc., 1997.

\bibitem{wilde2011classical}
Mark~M Wilde.
\newblock {From Classical to Quantum Shannon Theory}.
\newblock {\em arXiv preprint arXiv:1106.1445}, 2011.

\bibitem{nielsen2002quantum}
Michael~A Nielsen and Isaac Chuang.
\newblock {Quantum Computation and Quantum Information}, 2002.

\bibitem{renyi1961measures}
Alfr{\'e}d R{\'e}nyi et~al.
\newblock {On Measures of Entropy and Information}.
\newblock In {\em Proceedings of the fourth Berkeley symposium on mathematical
  statistics and probability}, volume~1, pages 547--561, 1961.

\bibitem{muller2013quantum}
Martin M{\"u}ller-Lennert, Fr{\'e}d{\'e}ric Dupuis, Oleg Szehr, Serge Fehr, and
  Marco Tomamichel.
\newblock {On Quantum R{\'e}nyi Entropies: A New Generalization and some
  Properties}.
\newblock {\em Journal of Mathematical Physics}, 54(12):122203, 2013.

\bibitem{marshall1979inequalities}
Albert~W Marshall, Ingram Olkin, and Barry~C Arnold.
\newblock {\em {Inequalities: Theory of Majorization and its Applications}},
  volume 143.
\newblock Springer, 1979.

\bibitem{roberts1973convex}
A.W. Roberts and D.E. Varberg.
\newblock {\em {Convex Functions}}.
\newblock Pure and Applied Mathematics; a Series of Monographs and Textbooks,
  57. Academic Press, 1973.

\bibitem{ho2015convexity}
Siu-Wai Ho and Sergio Verd{\'u}.
\newblock {Convexity/Concavity of R{\'e}nyi Entropy and $\alpha$-mutual
  Information}.
\newblock In {\em Information Theory (ISIT), 2015 IEEE International Symposium
  on}, pages 745--749. IEEE, 2015.

\bibitem{zyczkowski2003renyi}
Karol {\.Z}yczkowski.
\newblock {R{\'e}nyi Extrapolation of Shannon Entropy}.
\newblock {\em Open Systems \& Information Dynamics}, 10(03):297--310, 2003.

\end{thebibliography}

\end{document}